\titlespacing*{\section} {0pt}{1.5ex plus .1ex}{1.5ex plus .1ex}
\titlespacing*{\subsection} {0pt}{1ex plus .1ex}{1ex plus .1ex}
\titlespacing*{\subsubsection}{0pt}{0.75ex plus .1ex}{0.75ex plus .1ex}
\providecommand{\keywords}[1]{\textbf{\textit{Keywords:}} #1}
\providecommand{\JELs}[1]{\textbf{\textit{JEL Classification:}} #1}
\newtheoremstyle{exampstyle}
{\topsep}
{\topsep} 
{} 
{} 
{\bfseries}
{.}
{.5em} 
{} 
\theoremstyle{exampstyle} 
\theoremstyle{exampstyle} 
\theoremstyle{exampstyle} 
\theoremstyle{exampstyle} \newtheorem{prop}{Proposition}
\theoremstyle{exampstyle} 
\theoremstyle{exampstyle} 
\theoremstyle{exampstyle} 
\begin{document}
	
	\title{Blockchain innovation in promoting employment\thanks{We thank Fan Liu, Weibiao Xu, and Ziliang Yu for their helpful comments. All errors are our own.}}
	\date{\vspace{3mm} \today}
	\author{David Lee Kuo Chuen \medskip\\{\normalsize Singapore University of Social Sciences}\medskip\\\textit{\normalsize davidleekc@suss.edu.sg} \and Yang Li\footnote{Corresponding author. 463 Clementi Road, Singapore, 599494.}\medskip\\{\normalsize Singapore University of Social Sciences}\medskip\\\textit{\normalsize liyang@suss.edu.sg}}
	\maketitle
	
	\onehalfspacing
	
	\thispagestyle{empty}
	
	\begin{abstract}
		Blockchain technology, though conceptualized in the early 1990s, only gained practical relevance with Bitcoin's launch in 2009. Recent advancements have demonstrated its transformative potential, particularly in the digital art and global payment sectors. Non-fungible tokens (NFTs) have redefined digital ownership, while financial institutions use blockchain to enhance cross-border transactions, reducing costs and settlement times. Using the Diamond-Mortensen-Pissarides (DMP) model, this paper examines blockchain’s impact on labor markets by improving job-matching efficiency, thereby reducing unemployment. However, high research costs and competition with incumbent technologies hinder early-stage blockchain adoption. We extend the DMP model to analyze the role of government intervention through tax and wage policies in mitigating these barriers. Our findings suggest that lowering firm tax rates can accelerate blockchain innovation, enhance labor market efficiency, and promote employment growth, highlighting the critical balance between technological progress and economic policy in fostering blockchain-driven economic transformation.
	\end{abstract}

	\noindent \keywords{Blockchain, NFTs, Labor Market Efficiency, R\&D Costs,Government Policy}
	\\
    
	\noindent \JELs{E24, O33, G28}
    
	\strut

	\pagebreak
	\setcounter{page}{1}
	
\section{Introduction}
	The basic concept of blockchain was first introduced in early 1990s (Haber and Stornetta, 1991) but there is no real product emerged until Nakamoto wrote his paper (Nakamoto, 2008) and developed Bitcoin in 2009. Although blockchain may not have any real potential, others may do. Recently, blockchain technological advances have shown potential to significantly alter the artwork marketplace and the global payment system.
	
	The NFTs artwork market is growing and give new meaning to digital art. Non-fungible tokens (NFTs) are the newest way to trade collectibles, and encompass art forms ranging from digital art to GIFs to music and beyond. Essentially, they are like any other physical collector's item, but instead of receiving an oil painting on canvas to hang on your wall, for example, you get a JPG file. Because they hold value, they can be bought and sold just like other types of art – and, like with physical art, the value is largely set by the market and by demand. Crucially, NFTs are part of the Ethereum blockchain (Ethereum is a type of cryptocurrency), and the NFT data is stored on the blockchain, each token cannot be destroyed, lost or duplicated. Ownership of these tokens is also immutable, which means collectors actually possess their NFTs, not the companies that create them. Another benefit of storing historical ownership data on the blockchain is that items such as digital artwork can be traced back to the original creator, which allows pieces to be authenticated with no chance of buying a counterfeit art piece. Therefore, if successful, the NFTs would reshape the artwork market place.
	
	In addition, there have been an increasing number of financial institutions that use blockchain technology to disrupt the global payment system, such as Ripple, American Express, UBS, and Standard Chartered. The primary goal of these fintech companies and financial institutions is to make cross-border and cross currency transaction in a cheaper and faster way. As we all known, the speed of sending remittance cross border is very slow and may often take two days while the cost is very high. With the help of blockchain, Ripple uses cryptography and information technology to record data in open and distributed ledgers. This protocol is able to significantly increase the security and efficiency of information transfer and transactions. According to Ripple’s report (2016), it can cut bank’s global settlement costs up to 60\%. This example illustrates that the promising future of utilizing blockchain technology to reshape the global payment system.
	
	To gain some idea about how the success of the above two examples creates new job opportunities effectively, we first follow the Diamond-Mortensen-Pissarides (DMP) model with Cobb-Douglas form matching function and then we shed light on what effects of a technological progress in increasing matching efficiency on unemployment rate in stationary equilibrium. For example, when a NFTs marketplace sells digital artwork or a financial institution initiates a cross border (cross currency) payment using more advanced technology, it is more likely that the matching technology would be improved. This fact may make creating a new vacancy (job) more attractive to firms because the vacancy will be filled more quickly and thereby it may eventually lower the unemployment. On the other hand, the increase in efficiency also implies that vacancies would be filled faster, which tends to reduce the number of vacancies in the economy. In other words, the blockchain innovation is able to promote matches between firms and workers in deceasing both vacancies and unemployment rate.
	
	It is worthy to note that basic concepts have existed for a while, however, technological progress of blockchain-based real world product was very slow. To gain some idea about the small market size of blockchain based product, consider the following. Technological innovations, particularly disruptive ones, often need tremendous research costs and face severe competition with incumbent traditional matured ones. In the early stage of technological progress, if the initial blockchain innovation is not advanced enough, then new discoveries are not able to compensate the research costs. In this case, the incumbent traditional technology cannot be displaced and it still retains the market even if its matching process is relatively less efficient. Although the ultimate level of blockchain innovation would be sufficiently large so as to replace the old products as technological progress continues, it will not be developed once the initial innovations was gave up. 
	
	Besides the effect of improved match efficiency, we next extend our model to investigate the negative effect of research cost on reducing unemployment rate. We find that there is no chance for a blockchain innovation to be developed into real world product since it would not enable to yield enough return to cover the research cost and hire workers in the early stage. This result is in fact consistent with the recent development of blockchain based product. We then modify the wage determination through Nash Bargaining in the DMP model and assume the government mandates that the wage is equal to a fraction of the firm's pre-tax output. The government is also supposed to be able to impose a tax on both firms and workers. In this modified model, we can investigate whether and how the tax rate and the wage determination mitigate the negative effects of research costs on innovation. Intuitively, a decrease in firm-tax rate tends to help technological progress and improve social welfare in the following two aspects. It brings the frontier of knowledge to the required critical level to yield higher returns, but also increases the employment rate by stimulating inventors engaged in R\&D.
	
	\textbf{Related literature.} Blockchain technology has attracted growing academic interest due to its potential to disrupt traditional markets, particularly in financial transactions and digital asset ownership. This section reviews key contributions related to blockchain’s economic impact, its role in labor markets, and the challenges of early-stage technological adoption.

    \textbf{Blockchain Innovation and Market Disruption.} Nakamoto (2008) developed Bitcoin, demonstrating blockchain’s real-world applicability. Since then, research has explored its broader economic implications. Catalini and Gans (2016) discuss how blockchain reduces verification and networking costs, enabling decentralized trust mechanisms. Böhme et al. (2015) provide an overview of Bitcoin’s evolution and its role in financial transactions, highlighting its disruptive potential in payments. Further, Cong and He (2019) analyze how blockchain’s transparency and immutability improve contract efficiency and reduce agency problems in financial markets.

    \textbf{NFTs and the Digital Art Market.} The emergence of non-fungible tokens (NFTs) has sparked discussions on digital ownership and value creation. Kher et al. (2022) examine the economic mechanisms behind NFTs, emphasizing their role in establishing scarcity and ownership in digital markets. Dowling (2022) highlights the speculative nature of NFT pricing, noting that market dynamics resemble traditional art markets. Another key contribution by Nadini et al. (2021) provides empirical evidence on the structure of NFT transactions, revealing patterns in pricing, resales, and collector behavior. These studies collectively suggest that NFTs have reshaped digital art markets by creating new revenue streams for artists and enhancing transparency in provenance tracking.

    \textbf{Blockchain in Financial Services.} Blockchain’s impact on the global payment system has also been extensively studied. Ripple, a prominent blockchain-based payment network, has demonstrated significant efficiency gains in cross-border transactions (Ripple, 2016). Lee and Deng (2018) examine how blockchain improves remittance processes, reducing both transaction time and cost. Fernández-Villaverde and Sanches (2019) extend this discussion by modeling the monetary implications of cryptocurrencies and decentralized financial systems. These studies support the argument that blockchain-based financial applications can enhance transaction security and liquidity, disrupting traditional banking models.

    \textbf{Blockchain and resilience.} The growing reliance on blockchain-based financial infrastructure underscores the need to enhance its resilience and reliability, particularly in asset management and financial stability. Cheng et al. (2021) offer a resilience assessment framework applicable to blockchain networks in mitigating systemic risks, while their work on multi-hazard disruptions informs blockchain defense against cyberattacks and transactional failures. Reliability modeling (Cheng and Elsayed, 2017 and 2018), can be extended to smart contracts and decentralized applications to minimize failure probabilities. Hybrid blockchain models integrating permissioned and permissionless mechanisms may benefit from insights on mixed-unit reliability (Cheng and Elsayed, 2016). Resilience-based restoration strategies (Cheng et al., 2024) could shape blockchain policies amid market stress, while maintenance frameworks for self-service systems (Wei et al., 2025) can support DeFi liquidity stability. The shift from reliability to resilience (Cheng et al., 2024) underscores the need for proactive measures beyond fault tolerance. Additionally, network analysis techniques used to assess financial shock transmission (Li et al., 2024) can be applied to blockchain consensus mechanisms to evaluate systemic risks. Optimal sequential testing plans (Cheng and Elsayed, 2020) and fault diagnosis methods under varying conditions (Gao et al., 2021) may enhance blockchain security and efficiency. Furthermore, studies on critical infrastructure resilience (Cheng et al., 2021) and service reliability in self-service systems (Wei et al., 2024) provide valuable insights into maintaining blockchain network robustness. Integrating these resilience strategies with financial stability models can strengthen blockchain-driven asset management against liquidity crises and panic-driven redemptions, complementing broader financial stability research (Allen and Gale, 1998; Diamond and Dybvig, 1983; Acharya and Yorulmazer, 2008).

    \textbf{Technological Innovation, Labor Markets, and Policy Implications.} The broader economic implications of blockchain adoption on labor markets can be analyzed through the Diamond-Mortensen-Pissarides (DMP) framework (Diamond, 1982; Mortensen and Pissarides, 1994). This model provides insights into how technological progress influences job creation and unemployment through improved matching efficiency. Acemoglu and Restrepo (2018) explore the impact of automation on labor markets, arguing that technology-induced efficiency gains can reduce unemployment in some sectors while displacing jobs in others. Similarly, Aghion et al. (2019) investigate the role of innovation incentives in shaping labor demand and firm productivity. However, blockchain adoption faces high research and development (R\&D) costs, limiting its initial growth. Bresnahan and Trajtenberg (1995) discuss how general-purpose technologies require significant investment before reaching widespread adoption. Jovanovic and Rousseau (2005) emphasize that early-stage technological advancements often struggle to compete with established incumbents, delaying diffusion. To address these barriers, Bloom et al. (2013) examine the role of government intervention in fostering innovation through tax incentives and R\&D subsidies. Their findings suggest that reducing firm tax rates can accelerate technological adoption and labor market expansion.

    While previous studies provide valuable insights into blockchain’s disruptive potential, the interaction between blockchain innovation, labor market dynamics, and policy interventions remains underexplored. This paper builds on the existing literature by incorporating a modified DMP framework to assess how blockchain’s impact on job-matching efficiency can mitigate unemployment, despite high research costs. Additionally, we extend previous findings by analyzing how government tax and wage policies influence blockchain adoption and employment growth.
	
	In Section 2, we describe our basic model in detail and the results of unemployment rate in stationary equilibrium; and the analysis of government tax rate policy is given in Section 3. Section 4 concludes our insightful results.
	
	\section{Basic model}
	We will begin by describing the economic agents in the Diamond-Mortensen-Pissarides (DMP) model and the environment in which they operate and then incorporate the matching efficiency characterized by technology innovation associated with research cost.

	\subsection{The Environment}
	The economy is consist of a continuum of workers, \(i, \in [0,1]\) in infinite periods with \(t=0, 1, 2, ...\). In each period, a worker is either employed and thus earns wage \(w_t\) or unemployed and thus receives benefit \(b\). Note that \(w_t\) is a market wage in this model and that all employed workers will be earning the same wage. In other words, there is not going to be any process of searching for a better job offer in our model. Workers are risk neutral and the worker's discount factor is labeled as \(\beta\) that is equal to \(\beta \equiv 1/(1+r)\) where \(r\) is the net interest rate. In addition, we assume that an employed worker loses job with probability \(\sigma > 0\) in each period. 

	In the other side of the labor marker, a firm's job (one firm provides only one job opportunity), in each period, is either filled and thus produces output \(y\) (i.e., the firm earns profit \(y-w_t\)) or vacant and thus produces nothing (i.e., the firm pays vacancy cost \(c>0\)). We assume the product marker is competitive and the assumption of free entry and exit implies that (1) new firms can create vacant jobs and (2) existing firms can exit to avoid vacancy cost \(c\). All firms are risk neutral and the discount factor for them is also equal to \(\beta\) as workers. In other words, the firms in this model maximize the present value of their expected profits. Note that we assume the filled job produces value such that the productivity yields higher return than the unemployed benefit (i.e., \(y>b\)).
	
	We use the matching function approach pioneered by Diamond-Mortensen-Pissarides to model the process in which it is difficult for unemployed workers to find firms with vacancies. Specifically, this aggregate matching function takes the Cobb-Douglas form
	\[
		M(u_t,v_t) = A u_t^\alpha v_t^{1-\alpha},
	\]
where \(M\) is number of unemployed workers who find a job, \(u_t\) is the number of workers looking for a job, and \(v_t\) is the number of firms with vacant jobs. This matching technology (like a production function) summarizes the outcome of a complex process of how many successful matches of an unemployed worker to a vacant job will occur, given the number of workers looking for a job and given the number of available jobs out there. It is straightforward to show that the first partial derivatives with respect to \(u_t\) and \(v_t\) are positive respectively, that is, the more workers are looking jobs and the more jobs will be filled, and the more job are available and the more workers will find one. This matching form also shows that the marginal products are diminishing as we expected. Finally, the function tells us it is constant returns to scale, for example, if the labor market double in size (twice as many workers looking for jobs and twice as many firms trying to hire), we will get twice as many successful matches.

	As in DMP model, it is useful to define some notation based on the matching function. The labor market tightness, defined as \(\theta_t \equiv v_t/u_t\), is a word traditionally used to describe labor markets; the labor market is said to be tight if, from a firm's point of view, it is difficult to find workers. In this sense, we will say that the market is ``tight'' of there are many vacancies relative to the number of unemployed workers, while the labor market is ``slack'' if there are few vacancies and many unemployed workers. The job finding rate, defined as \(p(\theta_t) \equiv M(u_t,v_t)/u_t = M(1,\theta_t) = A \theta_t^{1-\alpha}\), is the probability that an employed worker will find a job. Note that this job find rate \(p(\theta_t)\) is an increasing and concave function. When the labor market is tighter, it is easier for a work to find a job, but there are diminishing returns to this process. The vacancy filling rate, defined as \(q(\theta_t) \equiv M(u_t,v_t)/v_t = M(1/\theta_t,1) = A \theta_t^{-\alpha} \), is the probability that a given vacancy will be filled. Note that this rate is a decreasing and concave function. When the labor market is tight, a firm with a vacancy is less likely to find a worker, but there are diminishing returns to this process again.
	
	\subsection{Innovation and matching efficiency}
	Now we consider the firm is able to improve the efficiency of the matching process measured by an increase in \(A\) in the Cobb-Douglas matching function through blockchain technology innovation. It is worth to noting that this improvement will cost the firm \(\eta\) units. In addition, we assume that the efficiency measure \(A(\eta)\) is function of the research cost \(\eta\) and it is an increasing and concave function. In this subsection, we will next solve the model focusing on the stationary equilibrium and investigate whether achieving an advanced blockchain technology is able to compensate the research cost and thus reduce vacancies and unemployment rate compared to the status with no technology innovation.
	
	\subsection{Steady-state equilibrium}
	\textbf{Beveridge curve.} In the first step in solving the model, we look at the Beveridge curve. We can use the ``lake'' diagram to capture how workers move back and forth between employment and unemployment. The number of workers moving from unemployment to employment depends on labor market conditions, which reflect how many firms (jobs) are hiring. For any given value of tightness \(\theta_t\), the dynamics of the unemployment rate is determined by
	\[
		u_{t+1} = u_t + \delta (1-u_t) - p(\theta_t) u_t.
	\]
In general, note that we can think of the value of a match between a firm and a worker as being stochastic over time. For example, the demand for the good (especially the technology product) produced by the firm may change. If the value falls too much, the firm may prefer to close and lay off the worker. The specification here is a simple representation of that process. We can think that with probability \((1-\delta)\), the demand for the firm's product stays the same, but with probability \(\delta\) the demand completely disappears (perhaps another technology product becomes more popular among consumers). 

	Our focus in the model is going to be on stationary equilibria, where \(u_t\) is constant over time. It is easy to show that a steady-state of the lake model is characterized by:
	\[
		u = u + \delta (1-u) - p(\theta) u \text{ or } u = \frac{\delta}{\delta + p(\theta)}.
	\]
Substituting the job finding rate into the above equation, we obtain one of our key equations, we label it as ``Beveridge Curve'' (BC):

	\begin{equation}
		\label{BC}
		u = \frac{\delta}{\delta + A(\eta) \theta^{1-\alpha}} \tag{BC}
	\end{equation}	

	This equation tells us the steady-state relationship between vacancies \(v\) and unemployment rate \(u\). To understand the level of unemployment, however, we need to look at firms' incentives to create new vacancies, that is, the process of job creation.

	\textbf{Job creation.} We again focus on steady-state allocations and in each period, a firm is in one of two states: job filled (F) and job vacant (V). Given constant \(u\) and \(v\), the Bellman equations that characterize the expected present value of profits for a firm that is in each of these two states are given by:
	\begin{align*}
		J_F &= y - w - \eta + \beta [\delta J_V + (1-\delta) J_F] \\
		\text{and } J_V &= -c + \beta [q(\theta) J_F + (1-q(\theta)) J_V],
	\end{align*}
where \(J_F\) is the value of having a filled job and \(J_V\) is the value of having a vacant job. When the job is filled, the profit for a firm is equal to the productivity \(y\) minus the wage \(w\) and the research cost \(\eta\) (note here that if the firm refuses to update its technology level and thus the research cost \(\eta=0\) and the measure of matching efficiency stays at \(A\)); the profit for a firm tomorrow discounted by \(\beta\) is equal to the expected profits of losing worker with probability \(\delta\) and keeping worker with probability \(1-\delta\). When the job is vacant, the firm suffers vacancy cost \(c\) and the profit tomorrow discount by \(\beta\) is equal to the expected profits of finding a worker with probability \(q(\theta) = A(\eta) \theta^{-\alpha}\) and being vacant with probability \(1-q(\theta)\).

	Notice that here we are using the fact that a firm that loses its worker will choose to create a new vacancy.The endogenous variables in the model \((w, \theta)\) must adjust so that this behavior reflects an optimal choice. We are also using the fact that we are looking at a stationary equilibrium in writing \(J_F\) and \(J_V\) as being independent of time. Notice also that the firm takes the wage that it must pay to the worker as given. In particular, the firm is not searching to find a worker who will accept a lower wage since our primary objective in this paper is to figure out the effects of technology innovation on unemployment rate. In other words, we simplify the on-the-job search process and every worker will need to be paid the ``market'' wage, and any worker the firm is matched with is just as good as another.

	Before solving the above two Bellman equations, we first clarify that free entry condition yields \(J_V=0\) always holds in a steady-state equilibrium, which helps us solve these equations. Suppose that \(J_V>0\), that is, if there were positive gains to entry, all potential firms would enter (i.e., \(v=\inf\)). But then the labor market would be so tight (i.e., \(\theta = \inf\)) that firms would find it impossible to hire a worker (i.e., \(q(\theta)=0\)). If it is impossible to hire a worker, then the value entry cannot be positive (i.e., \(J_V = -c/(1-\beta)\)). Thus, it is a contradiction. It is actually possible to have \(J_V<0\) if the vacancy cost \(c\) is very large. But in this case, no jobs are ever created (i.e., \(v=0\)) and thus there is no match (i.e., \(M(u,v)=0\)) nor impossible to find a job (i.e., \(p(\theta)=0\)). This fact gives us the allocation has all workers being unemployed (i.e., \(u=1\)), which is not so interesting. Therefore, when we say that we focus on steady-state allocations with \(0<u<1\), we are implicitly assuming that \(c\) is not too large. We will refer to \(J_V=0\) as the ``free entry condition'' for this model.
	
	Plugging \(J_V=0\) into the second Bellman equation, the free-entry condition pins down the value of being a firm with a worker in equilibrium is determined by:
	\[
		\beta J_F = c \frac{1}{q(\theta)} = c \frac{1}{A(\eta) \theta^{-\alpha}}.
	\]
This equation is intuitive. The expected profit from a filled job (starting tomorrow) must exactly offset the expected cost of finding a worker to hire where \(c\) is the finding cost per period and \(1/q(\theta)\) is the expected duration of vacancy.

	Combining the above equation, \(J_V=0\) and the first Bellman equation, we have
	\[
		\frac{c}{\beta q(\theta)} = y - w - \eta + \beta (1-\delta) \frac{c}{\beta q(\theta)}.
	\]
Using \(\beta = 1/(1+r)\), we have
	\begin{equation}
	\label{JC}
	w = y - \eta - \frac{r+\delta}{A(\eta) \theta^{-\alpha}} c. \tag{JC}
	\end{equation}
This is going to be our another key equation, labeled as ``Job Creation'' condition (JC). This equation, for a given level of labor market tightness \(\theta\), tells us what value of the wage will make a firm exactly indifferent between creating a new vacancy and not. In addition, this relationship is like a demand curve for labor. When the wage is high, few firms will create jobs and, as a result, there will be a lot of slack in the labor market. In this sense, the demand for labor will be low. When the wage is low, in contrast, many firms will create new jobs and the demand for labor coming from these firms with new openings will be high.

	\textbf{Wage determination.} Unlike the DMP model based on bargaining theory, our approach assumes that the government passes a law stating that every worker's wage must be equal to a fraction \(\gamma\) of the firm's output \(y\) after paying research cost \(\eta\). In other words, in our modified model there is no Nash bargaining. Instead, each employed worker's wage is mandated by law to be \(w = \gamma (y - \eta)\), where \(\gamma\) is a number set by the government. Notice that this modification in our model plays important roles in the following two aspects. Firstly, it can help us how this parameter \(\gamma\) affects equilibrium outcomes and thus understand whether and how this wage intervention improves allocations and reduces unemployment rate. Secondly, this wage rule avoids determining the wage through utilizing the measure of bargaining power that is difficult to calibrate in reality.
	
	If the government sets mandated wage as \(w = \gamma (y - \eta)\), this condition together with the job creation condition (JC) determines the tightness \(\theta^*\) in equilibrium:
	\[
		\gamma (y - \eta) = y - \eta  - \frac{r+\delta}{A(\eta) (\theta^*)^{-\alpha}}
	\]
or,
	\[
		\theta^* = \left[ \frac{(1-\gamma) A(\eta) (y-\eta)}{(r+\delta) c} \right]^{\frac 1\alpha}.
	\]
Combining with the Beveridge curve (BC) and the tightness \(\theta^*\), we have the unemployment rate in equilibrium \(u^*\) characterized as
	\[
		u^* = \frac{\delta}{\delta + p(\theta^*)}
	\]
or,
	\begin{equation}
		\label{u_base}
		u^* = \frac{\delta}{ \delta + \left[ \frac{1-\gamma}{(r+\delta)c} \right]^{\frac{1-\alpha}{\alpha}} \left[ A(\eta) (y-\eta)^{1-\alpha} \right]^{\frac 1\alpha} }.
	\end{equation}
	
	\subsection{Effects of innovation and wage rule on unemployment}
	Now that we have already derived the steady-state unemployment rate of the model, we can investigate its comparative statics. How will an innovation associated with an increase in research cost \(\eta\) affect the unemployment? Does blockchain innovation promote employment compared to the equilibrium outcome with a less developed technology? We will start with the effect of innovation on unemployment.
	
	Too see how \(\eta\) changes the steady-state equilibrium value \(u^*\), let us differentiate equation \eqref{u_base} with respect to \(\eta\). Then, we have
	\[
		\frac{d u^*}{d \eta} = - \frac{ \delta \left[ \frac{1-\gamma}{(r+\delta) c} \right]^{\frac{1-\alpha}{\alpha}} \frac 1\alpha \left[ A(\eta)(y-\eta)^{1-\alpha} \right]^{\frac 1\alpha -1} }{ \left\{ \delta + 	\left[ \frac{1-\gamma}{(r+\delta) c} \right]^{\frac{1-\alpha}{\alpha}} \left[ A(\eta)(y-\eta)^{1-\alpha} \right]^{\frac 1\alpha} \right\}^2 } \cdot (y-\eta)^{-\alpha} \left[ A'(\eta) (y-\eta) - (1-\alpha) A(\eta) \right].
	\]
We can actually see this sign of \(d u^* / d \eta\) by looking at the last term of the above equation \( A'(\eta) (y-\eta) - (1-\alpha) A(\eta) \). After differentiating this term with respect to \(\eta\), we are able to show that it is decreasing in \(\eta\). In other words, the sign of \(d u^* / d \eta\) is negative when the research cost \(\eta\) is relatively low while it is positive when \(\eta\) is very high. Therefore, the effect of innovation on unemployment rate is ambiguous, which gives us the next result.

	\begin{prop}
		There exist a threshold value of \(\hat{\eta}\) such that when achieving an innovation at research cost below \(\hat{\eta}\) then it will reduce unemployment in equilibrium; otherwise an innovation increases unemployment in steady-state.
	\end{prop}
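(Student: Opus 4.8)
The plan is to reduce the claim to a one-dimensional sign analysis of the single factor $g(\eta)\equiv A'(\eta)(y-\eta)-(1-\alpha)A(\eta)$ that multiplies everything else in the displayed expression for $du^*/d\eta$. First I would check that, on the economically relevant domain $\eta\in[0,y)$, every other factor in that expression is strictly positive: the large fraction is a positive numerator (using $\delta>0$, $c>0$, $r>0$, $0<\gamma<1$, $0<\alpha<1$, together with $A(\eta)>0$ and $y-\eta>0$) divided by a squared, hence positive, denominator, and $(y-\eta)^{-\alpha}>0$. Consequently $du^*/d\eta$ has the sign opposite to that of $g(\eta)$, and the whole problem becomes one of locating the sign changes of $g$.

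Second, I would show $g$ is strictly decreasing on $[0,y)$. Differentiating gives $g'(\eta)=A''(\eta)(y-\eta)-A'(\eta)-(1-\alpha)A'(\eta)=A''(\eta)(y-\eta)-(2-\alpha)A'(\eta)$, which is strictly negative because $A$ is increasing and concave ($A'>0$, $A''<0$), $y-\eta>0$, and $2-\alpha>0$. Hence $g$ is continuous and strictly monotone on $[0,y)$, so it has at most one zero, and its sign pattern is ``positive then negative''.

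Third, I would pin down the crossing. As $\eta\uparrow y$ one has $g(\eta)\to-(1-\alpha)A(y)<0$, so $g$ is eventually negative. Under the (economically natural) maintained condition that the payoff to the first unit of research is large enough that $g(0)=A'(0)\,y-(1-\alpha)A(0)>0$ — i.e., a marginal innovation does lower unemployment at $\eta=0$ — the intermediate value theorem together with strict monotonicity yields a unique $\hat\eta\in(0,y)$ with $g(\hat\eta)=0$, $g>0$ on $[0,\hat\eta)$, and $g<0$ on $(\hat\eta,y)$. (If $g(0)\le0$ the statement holds trivially with $\hat\eta=0$.) Feeding this back through Step~1: for $\eta<\hat\eta$ we get $du^*/d\eta<0$, so $u^*$ is strictly decreasing there and an innovation with research cost below $\hat\eta$ delivers a lower steady-state unemployment rate than no innovation (or than any cheaper innovation); for $\eta>\hat\eta$ we get $du^*/d\eta>0$, so raising the research cost past $\hat\eta$ increases steady-state unemployment. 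That is precisely the proposition.

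The main obstacle is not the calculus, which is routine given the assumptions on $A$, but the bookkeeping: one must (i) fix the domain as $\eta\in[0,y)$, since otherwise $w=\gamma(y-\eta)$ and the firm's surplus $y-\eta$ turn negative and $u^*$ ceases to be a meaningful equilibrium object, and (ii) state explicitly the condition $g(0)>0$ under which the unemployment-reducing regime is non-empty and $\hat\eta$ is interior. Everything else — positivity of the prefactor, $g'<0$, and single-crossing — follows directly from $A$ being positive, increasing and concave and from $0<\alpha<1$, $0<\gamma<1$.
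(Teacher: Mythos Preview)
Your proposal is correct and follows essentially the same route as the paper: reduce the sign of $du^*/d\eta$ to that of $g(\eta)=A'(\eta)(y-\eta)-(1-\alpha)A(\eta)$, compute $g'(\eta)=A''(\eta)(y-\eta)-(2-\alpha)A'(\eta)<0$, check the endpoint signs, and apply the intermediate value theorem. If anything, you are more careful than the paper, which simply asserts $A'(0)y-(1-\alpha)A(0)>0$ without flagging it as a maintained condition, whereas you correctly isolate it as the hypothesis needed for $\hat\eta$ to be interior.
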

	
	\begin{proof}
		First, we take derivative of \( A'(\eta) (y-\eta) - (1-\alpha) A(\eta) \) with respect to \(\eta\). This yields \(A''(\eta) (y-\eta) - (2-\alpha) A'(\eta)\) that is negative since we assume that \(A(\eta)\) is an increasing and concave function and that \(\eta < y\) and \( 0<\alpha<1\). Secondly, when \(\eta\) is equal to 0, we have \( A'(0) (y-0) - (1-\alpha) A(0) >0\); while when \(\eta\) is equal to \(y\), we have \( A'(y) (y-y) - (1-\alpha) A(y) <0\). Thus, the term \( A'(\eta) (y-\eta) - (1-\alpha) A(\eta) \) uniquely pins down to zero at \(\eta = \hat{\eta}\). Finally, it is easy to show that the term \( A'(\eta) (y-\eta) - (1-\alpha) A(\eta) > 0 \) as \(\eta < \hat{\eta}\), which implies that \(d u^* / d \eta <0\) (i.e., an innovation associated a lower research cost promotes unemployment). A larger increase in \(\eta\) such that \(\eta > \hat{\eta}\), however, the innovation causes more unemployment undesirably.
	\end{proof}

The intuition behind this result is relatively straightforward because there are two competing effects. The development in technology associated with larger research cost in \(\eta\) tends to promote the match efficiency \(A'(\eta)>0\), which encourages firms to create more vacancies and makes workers easier to find a job. However, this increase in \(\eta\) tends to reduce firm's profits \(y-\eta\), which causes firms to create fewer vacancies. We can see that the positive effect dominates if the cost is small, however, the negative effects turns out to be dominant as the research cost is very large. Figure 1 illustrate the above result with \(A(\eta) = A \cdot (\eta-\frac 12 \eta^2)\) and the parameter set (\(A=1, y=1, b=0, c=1, r=0.5, \delta=0.5, \alpha=0.5, \gamma=0.5\)) as a numerical example.

	\begin{figure}[H]
		\centering
		\includegraphics[width=\textwidth]{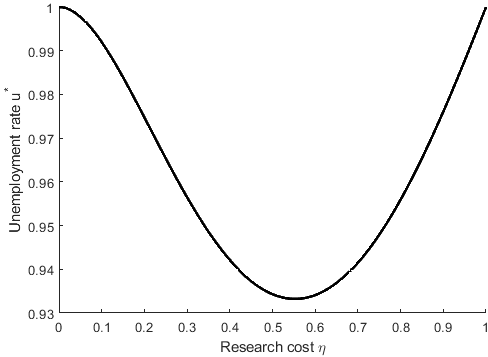}
		\caption{The effect of innovation on unemployment rate in equilibrium}
	\end{figure}
	
	By focusing on right region where the research cost is very high, we can see that an innovation in blockchain technology will cause an increase in unemployment rate. This fact gives a concern that when the innovation needs large research cost the development of technology in fact causes undesirable result of severed unemployment. Therefore, this result leaves an open question whether the government is able to promote employment even if the research cost is very high by adjusting the wage rule \(w = \gamma (y-\eta)\) effectively.
	
	We again start this analysis by differentiating the unemployment rate \(u^*\) given in equation \eqref{u_base} with respect to \(\gamma\), which yields
	\[
		\frac{d u^*}{d \gamma} = \frac{ \delta \frac{1-\alpha}{\alpha} \left[ \frac{1-\gamma}{(r+\delta) c} \right]^{\frac{1-\alpha}{\alpha}} \left[ A(\eta)(y-\eta)^{1-\alpha} \right]^{\frac 1\alpha} }{ \left\{ \delta + 	\left[ \frac{1-\gamma}{(r+\delta) c} \right]^{\frac{1-\alpha}{\alpha}} \left[ A(\eta)(y-\eta)^{1-\alpha} \right]^{\frac 1\alpha} \right\}^2 } \cdot \frac{1}{1-\gamma} > 0.
	\]	
Thus, this result captures the idea of government support on innovation through a decrease in wage, which can be interpreted as an increase in firm's profits. The next proposition states this result and it is depicted in Figure 2 by varying \(\gamma = 0.1, 0.5, 0.9\).

	\begin{prop}
		When the government mandates a lower wage, this decrease in \(\gamma\) is able to promote employment rate in equilibrium even if the research cost is very high.
	\end{prop}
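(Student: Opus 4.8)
The plan is to show directly that $du^*/d\gamma>0$, so that any reduction in $\gamma$ strictly lowers the steady-state unemployment rate, and then to observe that the sign computation is insensitive to the level of $\eta$. I would start from the closed form \eqref{u_base} and write the denominator as $\delta+g(\gamma)\,h(\eta)$, where $g(\gamma)=\left[\frac{1-\gamma}{(r+\delta)c}\right]^{\frac{1-\alpha}{\alpha}}$ and $h(\eta)=\left[A(\eta)(y-\eta)^{1-\alpha}\right]^{1/\alpha}$. Since $u^*=\delta/\bigl(\delta+g(\gamma)h(\eta)\bigr)$ depends on $\gamma$ only through $g$, the chain rule gives $du^*/d\gamma=-\delta\,g'(\gamma)\,h(\eta)/\bigl(\delta+g(\gamma)h(\eta)\bigr)^{2}$, which is exactly the expression displayed just before the proposition.

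The key step is then a sign check of the three factors. The squared denominator is positive and $\delta>0$. The factor $h(\eta)$ is strictly positive because $A(\eta)>0$, $y-\eta>0$ (we maintain $\eta<y$, as in Proposition 1), and $1/\alpha>0$. It remains to sign $g'(\gamma)$: differentiating, $g'(\gamma)=\frac{1-\alpha}{\alpha}\left[\frac{1-\gamma}{(r+\delta)c}\right]^{\frac{1-\alpha}{\alpha}-1}\cdot\left(\frac{-1}{(r+\delta)c}\right)$, which is strictly negative because $0<\alpha<1$ forces $\frac{1-\alpha}{\alpha}>0$, while $(r+\delta)c>0$ and $1-\gamma>0$ (the wage rule $w=\gamma(y-\eta)$ is only meaningful for $\gamma\in(0,1)$). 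Hence $du^*/d\gamma=-\delta\cdot(\text{negative})\cdot(\text{positive})/(\text{positive})>0$, so $u^*$ is strictly increasing in $\gamma$; equivalently, a lower mandated wage fraction $\gamma$ strictly raises employment $1-u^*$, and by integrating this inequality the same holds for any discrete cut in $\gamma$.

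Finally I would make explicit the ``even if the research cost is very high'' clause. The sign of $du^*/d\gamma$ above does not involve $A'(\eta)$ or the threshold $\hat\eta$ of Proposition 1 at all: $\eta$ enters the derivative only through the strictly positive multiplicative factor $h(\eta)$, which cannot flip the sign no matter how large $\eta$ is. In particular, even in the region $\eta>\hat\eta$ where an innovation on its own raises unemployment, lowering $\gamma$ still pushes $u^*$ down.

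I do not expect a genuine obstacle here: the argument is a one-line derivative followed by a sign table, and the derivative has already been produced in the text. The only point requiring care is to state up front the maintained parameter restrictions ($0<\alpha<1$, $\eta<y$, $\gamma\in(0,1)$, and $c,r,\delta>0$) under which $\theta^*$ and $u^*$ are well-defined interior objects, since these are precisely what guarantee $h(\eta)>0$ and $g'(\gamma)<0$.
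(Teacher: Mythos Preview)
Your proposal is correct and follows essentially the same route as the paper: the paper's entire argument for this proposition is the displayed computation of $du^*/d\gamma>0$ immediately preceding the statement, and your factorization $u^*=\delta/(\delta+g(\gamma)h(\eta))$ just repackages that same derivative with a slightly more explicit sign check. Your added remark that $\eta$ enters only through the positive factor $h(\eta)$, so the inequality survives in the region $\eta>\hat\eta$, makes explicit what the paper leaves implicit in the phrase ``even if the research cost is very high.''
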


	\begin{figure}[H]
		\centering
		\includegraphics[width=\textwidth]{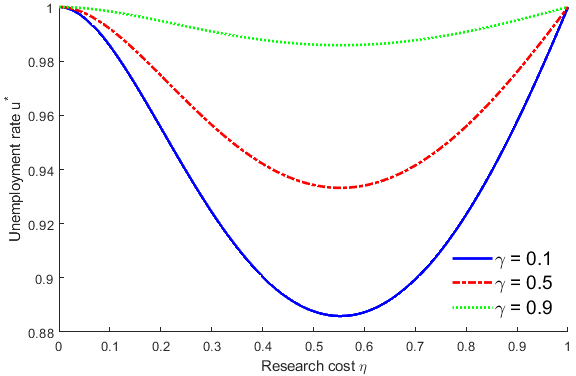}
		\caption{The effect of wage rule on unemployment rate in equilibrium}
	\end{figure}
	
	\section{Tax policy}
	In the previous section, we show that a reduce in wage rule \(\gamma\) is able to improve employment even if the corresponding research cost is very high. However, it is difficult to adjust the wage level in practice. Given the high research cost, is it still possible to reduce unemployment rate in equilibrium? In this section, we further incorporate a tax policy into the basic model and investigate the impact of tax rate on unemployment rate.
	
	\subsection{Equilibrium outcomes under the tax regime}
	Now consider the government imposes a tax on both firms and workers. In particular, each firm must pay a fraction \(\tau_f\) of its profit to the government and each worker must pay a fraction \(\tau_w\) of her wage to the government. After taxes, the firm keeps \((1-\tau_f)(y-\eta-w)\) and the worker keeps \((1-\tau_w)w\). The wage is still required to satisfy \(w=\gamma(y-\eta)\). 
	
	Note that after-tax profits in this model are \((1-\tau_f)(y-\eta-w) = (1-\tau_f)(1-\gamma)(y-\eta)\). Replacing \((1-\gamma)(y-\eta)\) from the job creation condition \eqref{JC} into this expression, we have new tightness \(\theta_\tau^*\) in equilibrium, which is determined by
	\[
		\theta_\tau^* = \left[ \frac{(1-\tau) (1-\gamma) A(\eta) (y-\eta)}{(r+\delta) c} \right]^{\frac 1\alpha}.
	\]
Plugging this new measure into the Beveridge curve (BC), we have the revised unemployment rate as follows.
	\begin{equation}
		\label{u_tax}
		u_\tau^* = \frac{\delta}{ \delta + \left[ \frac{(1-\tau_f) (1-\gamma)}{(r+\delta)c} \right]^{\frac{1-\alpha}{\alpha}} \left[ A(\eta) (y-\eta)^{1-\alpha} \right]^{\frac 1\alpha} }.
	\end{equation}
	
	\subsection{Effect of tax rate on unemployment}
	To see how \(\tau_f\) changes the steady-state unemployment rate, we differentiate equation \eqref{u_tax} with respect to \(\tau_f\). Then we have
	\[
		\frac{d u_\tau^*}{d \tau_f} = \frac{ \delta \frac{1-\alpha}{\alpha} \left[ \frac{(1-\tau_f) (1-\gamma)}{(r+\delta) c} \right]^{\frac{1-\alpha}{\alpha}} \left[ A(\eta)(y-\eta)^{1-\alpha} \right]^{\frac 1\alpha} }{ \left\{ \delta + 	\left[ \frac{(1-\tau_f) (1-\gamma)}{(r+\delta) c} \right]^{\frac{1-\alpha}{\alpha}} \left[ A(\eta)(y-\eta)^{1-\alpha} \right]^{\frac 1\alpha} \right\}^2 } \cdot \frac{1}{1-\tau_f} > 0.
	\]
Therefore, decreasing the tax lowers the unemployment rate in equilibrium. Intuitively, a decrease in tax yields higher firm's profits, which increase the value of a filled position. As a result, more vacancies will be created and to make the labor market to be tight. This fact make unemployment spells shorter on average, which reduces the steady-state unemployment rate. The next proposition states this analytical result and it is depicted in Figure 3 by varying \(\tau_f = 0, 0.5, 0.9\) given \(A(\eta) = A \cdot (\eta-\frac 12 \eta^2)\) and the parameter set (\(A=1, y=1, b=0, c=1, r=0.5, \delta=0.5, \alpha=0.5, \gamma=0.5\)).

	\begin{figure}[H]
		\centering
		\includegraphics[width=\textwidth]{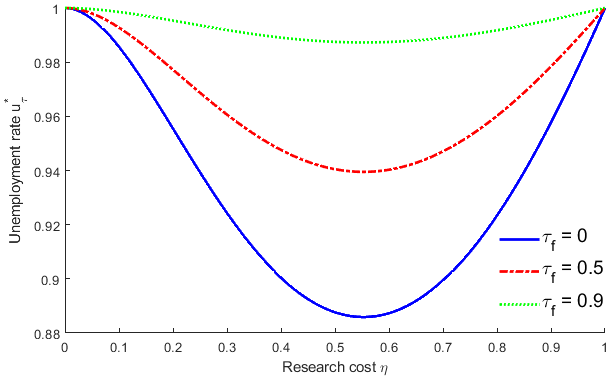}
		\caption{The effect of tax rate \(\tau_f\) on unemployment rate in equilibrium}
	\end{figure}

	\begin{prop}
		When the government reduces the tax rate \(\tau_f\) imposed on firm's profits, this policy is able to promote employment rate in equilibrium even if the research cost is very high.
	\end{prop}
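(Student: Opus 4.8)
The plan is to establish the claim by showing that the closed-form steady-state unemployment rate under the tax regime, equation \eqref{u_tax}, is a strictly increasing function of $\tau_f$ on the admissible parameter range; monotonicity then immediately implies that a reduction in $\tau_f$ strictly lowers $u_\tau^*$, i.e.\ strictly raises the employment rate $1-u_\tau^*$. Crucially, the argument must be shown to be insensitive to the magnitude of $\eta$, and that insensitivity is exactly what the phrase ``even if the research cost is very high'' refers to.

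First I would write $u_\tau^* = \delta/(\delta + \Phi(\tau_f))$, where $\Phi(\tau_f) \equiv \left[ \frac{(1-\tau_f)(1-\gamma)}{(r+\delta)c} \right]^{\frac{1-\alpha}{\alpha}} \left[ A(\eta)(y-\eta)^{1-\alpha} \right]^{\frac{1}{\alpha}}$ collects every term that depends on $\tau_f$. Since $\delta/(\delta+\Phi)$ is strictly decreasing in $\Phi$, it suffices to show that $\Phi$ is strictly decreasing in $\tau_f$; differentiating via the chain rule through the composite power reproduces exactly the displayed expression for $d u_\tau^*/d \tau_f$. Next I would check the sign of each factor in that expression: $\delta>0$; $\frac{1-\alpha}{\alpha}>0$ since $0<\alpha<1$; the two bracketed power terms are strictly positive because $0<\tau_f<1$, $0<\gamma<1$, $c>0$, and $r+\delta>0$; the factor $\left[ A(\eta)(y-\eta)^{1-\alpha} \right]^{1/\alpha}$ is strictly positive because $A(\eta)>0$ and $y-\eta>0$ under the maintained assumptions; the denominator is the square of a strictly positive quantity; and $1/(1-\tau_f)>0$. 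Hence $d u_\tau^*/d \tau_f>0$, which is the required monotonicity, so $\tau_f\downarrow$ forces $u_\tau^*\downarrow$.

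The one point worth emphasizing — and the step that is the actual content rather than routine algebra — is the uniformity in $\eta$: the research cost enters $\Phi$ only through the factor $\left[ A(\eta)(y-\eta)^{1-\alpha} \right]^{1/\alpha}$, which is strictly positive for every $\eta$ with $0<\eta<y$, no matter how large $\eta$ is. Unlike Proposition 1, where the sign of $d u^*/d\eta$ flipped at the threshold $\hat\eta$, no threshold in $\eta$ appears here, so the sign of $d u_\tau^*/d \tau_f$ is unambiguously positive regardless of the size of the research cost. I do not expect a genuine obstacle; the only things to be careful about are confirming the composite-power chain rule in the displayed derivative and making the uniformity-in-$\eta$ statement explicit, since that uniformity is precisely what makes the policy conclusion meaningful in the high-cost regime motivating this section. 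For completeness I would append the economic reading already sketched in the text: a lower $\tau_f$ raises after-tax profit $(1-\tau_f)(1-\gamma)(y-\eta)$, hence raises $J_F$, which induces more vacancy creation, tightens the labor market, shortens expected unemployment spells, and thereby lowers the steady-state unemployment rate.
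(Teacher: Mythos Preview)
Your proposal is correct and follows essentially the same approach as the paper: the paper's argument consists precisely of differentiating equation \eqref{u_tax} with respect to $\tau_f$, displaying the derivative, and observing that every factor is strictly positive so that $d u_\tau^*/d\tau_f>0$. Your added emphasis on the uniformity in $\eta$ and the economic interpretation are more explicit than the paper but match its content exactly.
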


    \section{Conclusion}
    Blockchain resilience is critical for maintaining financial stability, particularly in asset management, where systemic risks can lead to liquidity crises and panic-driven redemptions. By integrating resilience assessment frameworks, network analysis techniques, and financial stability models, blockchain-based systems can better withstand cyber threats, transactional failures, and market shocks. Leveraging insights from reliability engineering and financial crisis research, this study highlights the importance of proactive resilience strategies to ensure the robustness and long-term viability of decentralized financial infrastructure.

    Future research should explore the dynamic interplay between blockchain resilience and financial stability by developing real-time monitoring tools for systemic risk detection. Additionally, incorporating machine learning techniques into blockchain security models could enhance anomaly detection and threat mitigation. Further studies on regulatory frameworks and their impact on blockchain resilience would provide valuable insights for policymakers and financial institutions.

    \nocite{*}
    \bibliographystyle{plain}
    \bibliography{bibref}
    
	\end{document}